\DeclareMathOperator{\thespan}{span}
\newcommand{\A}{\ensuremath{\mathbf{A}}\xspace} 
\renewcommand{\a}{\ensuremath{\mathbf{a}}\xspace} 
\renewcommand{\H}{\ensuremath{\mathbf{H}}\xspace}
\newcommand{\U}{\ensuremath{\mathbf{U}}\xspace}
\newcommand{\V}{\ensuremath{\mathbf{V}}\xspace}
\renewcommand{\u}{\ensuremath{\mathbf{u}}\xspace}
\newcommand{\y}{\ensuremath{\mathbf{y}}\xspace}
\newcommand{\x}{\ensuremath{\mathbf{x}}\xspace}
\newcommand{\Q}{\ensuremath{\mathbf{Q}}\xspace}
\newcommand{\q}{\ensuremath{\mathbf{q}}\xspace}
\newcommand{\R}{\ensuremath{\mathbf{R}}\xspace}
\renewcommand{\r}{\ensuremath{\mathbf{r}}\xspace}
\renewcommand{\b}{\ensuremath{\mathbf{b}}\xspace}
\renewcommand{\S}{\ensuremath{\mathbf{S}}\xspace}
\newcommand{\e}{\ensuremath{\mathbf{e}}\xspace}
\newcommand{\svd}{\ensuremath{\operatorname{\mathtt{SVD}}}\xspace}
\newcommand{\toAgg}{\ensuremath{\operatorname{\mathtt{send-to-aggregator}}}\xspace}
\newcommand{\orthonormalize}{\ensuremath{\operatorname{\mathtt{orthonormalize}}}\xspace}
\newcommand{\fedOrthonormalize}{\ensuremath{\operatorname{\mathtt{federated-gram-schmidt}}}\xspace}
\newcommand{\fedNorm}{\ensuremath{\operatorname{\mathtt{federated-centering}}}\xspace}
\newcommand{\toClient}{\ensuremath{\operatorname{\mathtt{send-to-client}}}\xspace}
\newcommand{\stackVert}{\ensuremath{\operatorname{\mathtt{stack-vertically}}}\xspace}
\newcommand{\householder}{\ensuremath{\operatorname{\mathtt{householder-reflection}}}\xspace}
\newcommand{\givens}{\ensuremath{\operatorname{\mathtt{compute-givens-parameter}}}\xspace}
\newcommand{\ie}{i.\,e.\@\xspace}
\newcommand{\eg}{e.\,g.\@\xspace}
\newcommand{\wrt}{w.\,r.\,t.\@\xspace}
\newcommand{\cf}{cf.\@\xspace}
\newtheorem{proposition}{Proposition}[section]
\begin{document}
	
	\title{Privacy of federated QR decomposition using additive secure multiparty computation}
	
	\author{Anne~Hartebrodt,~\IEEEmembership{Member,~IEEE,}
        Richard~R\"ottger
\IEEEcompsocitemizethanks{\IEEEcompsocthanksitem A.H. and R.R are with University of Southern Denmark\protect\\
E-mail: hartebrodt@imada.sdu.dk\protect\\
Orcid: A.H. 0000-0002-9172-3137; R.R. 0000-0003-4490-5947}
\thanks{Manuscript received; revised }}



	
	
	
	\IEEEtitleabstractindextext{%
	\begin{abstract}
		Federated learning (FL) is a privacy-aware data mining strategy keeping the private data on the owners' machine and thereby confidential. The clients compute local models and send them to an aggregator which computes a global model. In hybrid FL, the local parameters are additionally masked using secure aggregation, such that only the global aggregated statistics become available in clear text, not the client specific updates. Federated QR decomposition has not been studied extensively in the context of cross-silo federated learning. In this article, we investigate the suitability of three QR decomposition algorithms for cross-silo FL and suggest a privacy-aware QR decomposition scheme based on the Gram-Schmidt algorithm which does not blatantly leak raw data. We apply the algorithm to compute linear regression in a federated manner. 
	\end{abstract}

\begin{IEEEkeywords}
Federated learning, federated matrix orthonormalisation, privacy analysis, linear regression, QR factorisation 
\end{IEEEkeywords}}

	\maketitle
	
	\IEEEdisplaynontitleabstractindextext
	\IEEEpeerreviewmaketitle
	
    	\section{Introduction}
	Federated learning has risen in popularity following the seminal article by McMahan \textit{et al} \cite{McMahan2017}, and possibly accelerated by a search for new privacy preserving data analysis techniques following the introduction of the GDPR in Europe. Federated learning is a data analysis paradigm, where the data stays on the data owners' machine and only aggregated parameters are exchanged with the other participants or a central aggregator. There are two main versions of federated learning, cross-silo federated learning and cross-device federated learning. Cross-device FL connects many devices with relatively low computational power, such as mobile phones or sensors in a learning process. The devices have access to limited data, for example for one user. Cross-silo federated learning, the learning paradigm adopted in this article, joins multiple data silos containing records for a larger group of participants together \cite{Kairouz2021}. The federated setting adopted in this article is a type of hybrid federated learning which relies on secure parameter aggregation (SMPC). This means the computations at the client sides are done on clear text, but the aggregation is performed using secure multiparty computation. Therefore, only the aggregated parameters become known to the aggregator, not the individual client's updates. The participants are honest-but-curious, following the protocol, but trying to infer as much information as possible from the updates they receive \cite{cramer2015secure}. Since we \enquote{only} use secure aggregation and allow the disclosure of intermediate and final results, the advantage is, that we can directly chain together different algorithms into pipelines. Modern data analysis workflows rarely only use a single tool, therefore the use of secure aggregation allows reasonable privacy guarantees, without the need to develop new protocols for every workflow.

	We recently identified federated QR orthonormalisation as a contributor to a more privacy preserving principal component analysis (PCA) algorithm \cite{Hartebrodt2021}. Using federated QR orthonormalization for singular value decomposition allows the right, patient-associated singular vectors to remain private when using federated power iteration. QR decomposition is a versatile tool used for many more applications in linear algebra, including the solution of systems of linear equations \cite{linalgFord}.
	
	In centralized learning, the traditional machine learning setup, where all data is on a global server, three algorithms for QR factorization are available. They are based on Householder reflection, Givens rotation and the Gram-Schmidt procedure. The Householder algorithm is the most efficient for general applications, while Givens rotation is advantageous for sparse matrices and parallel computing architectures \cite{Sameh1971}. Gram-Schmidt orthonormalization is not used as much in practice due to numerical instabilities on special matrices \cite{linalgFord}. However, a stabilized version of the algorithm exists and privacy considerations may take precedence over numerical issues. Consequently, it is interesting to evaluate the algorithms with regards to their suitability for federated learning. The primary goal of this article is to evaluate the data disclosure of the three algorithms when deployed in a federated setting. In an earlier article, we introduced federated Gram-Schmidt orthonormalisation \cite{Hartebrodt2021}, but it does not return the full decomposition. Therefore, it needs to be extended to return a full QR factorization. The other algorithms have not been explicitly introduced for cross-silo FL, therefore in this article we develop prototypes of their federated versions. We show that Householder reflection and Givens rotation have properties that render them unsuitable for federated computation, even when secure aggregation is used.
	
	There are many federated algorithms that can be used for PCA, including a QR based scheme which has been introduced by \cite{Bai2005} and which we extend using the orthonormalization scheme developed in this article. We consider the privacy of this scheme, when using federated QR orthonormalization. Notably, the question we want to answer is whether the introduction of the federated QR scheme increases the privacy of the algorithm. We find, that upper triangular matrices are vulnerable to data leakage even when applying Gram-Schmidt orthogonalisation.

    Furthermore, to highlight the use for the federated QR procedure in other applications, we apply federated QR decomposition to compute linear regression. This could be used as an alternative solver for the federated linear regression computation suggested for instance in \cite{Nasirigerdeh2022}. Our experiments demonstrate the same accuracy of the federated linear regression as standard standalone tools.
	
	To summarize, our contributions are the following:
	\begin{itemize}
		\item We develop prototypes of federated Householder reflection and Givens rotation and extend a previous algorithm to return the full QR decomposition.
		\item We analyze the Householder, Givens and Gram-Schmidt algorithms for QR decomposition with respect to their privacy when using hybrid federated learning with secure additive aggregation.
		\item We conclude that both federated Householder reflection and Givens rotation introduce critical data leaks even when using secure additive aggregation.
		\item We investigate a special application case of federated Gram-Schmidt orthogonalisation on upper triangular matrices which may expose the input data.
		\item We provide a realistic use case of federated QR decomposition, linear regression.
	\end{itemize}
	
	The remainder of this manuscript is organized as follows: in \Cref{sec:qr:preliminaries}, the preliminaries, including the three centralized QR algorithms are introduced. \Cref{sec:qr:related} introduces related work. Based on the centralised descriptions, in \Cref{sec:qr:federated-qr}, we develop federated QR schemes for all algorithms, and a more detailed description for the most suitable QR algorithm, the federated Gram-Schmidt procedure (\Cref{sec:qr:fed-gs}). In \Cref{sec:qr:pca}, we evaluate the privacy of a QR based PCA scheme. \Cref{sec:fed-reg} describes how to compute federated linear regression based on QR decomposition. We provide empirical results for our analyses in \Cref{sec:experiments}. Lastly, the results are briefly discussed in \Cref{sec:qr:discussion}. \Cref{sec:qr:conclusion} concludes the work.
	
		\section{Preliminaries}
	\label{sec:qr:preliminaries}
	\subsection{Data model and architecture}
	In this manuscript we assume matrix $\A \in \mathbb{R}^{n \times m}$ to be partitioned into a set of $s \in [S]$ partial data sets such that $\A^s \in \mathbb{R}^{n^s \times m}$. $[S]$ denotes the set of clients joining the learning system. This partitioning is referred to as horizontal. We assume all participants have a share of the data, and the ordering of the rows is known and fixed. We describe our algorithm using a star-like architecture. We expect 
	the parameters to be masked using additive secure aggregation (\cf \Cref{sec:qr:preliminaries:secure-aggregation}), therefore we assume that peer-to-peer communication is possible via secure channels, regardless of the underlying architecture. This implies that our algorithms could be run on a fully decentralized architecture. The main reason for the choice of an aggregator-based architecture is the reduction in overall communication, because without SMPC the clients do not have to transmit the intermediate parameters to all their peers, only to the aggregator. 
	
	\subsection{Notation}
	\label{sec:qr:preliminaries:notation}
	Vectors and matrices are denoted in boldface, scalars in normal font. Matrices are noted in upper case letters and consist of column vectors which are noted in lower case letters. For instance, the matrix $\A^{n\times m}$ consists of $m$ column vectors $\a_i$ where $i$ is the index of the column. Sometimes we refer to columns and rows of a matrix as $\A_{\bullet, i}$ and $\A_{i, \bullet}$ respectively. \Cref{tab:qr:notation} contains an overview over the most frequently used variables in this work.
	
	\begin{table}[ht]
		\centering
		\caption{Notation table.}\label{tab:qr:notation}
		\begin{tabular}{ll}
			\toprule
			Syntax & Semantics \\
			\midrule
			$[N]\subset\mathbb{N}$ & index set $[N]=\{i\in\mathbb{N}\mid1\leq i\leq N\}$ \\
			$S\in\mathbb{N}$ & number of sites \\
			$m\in\mathbb{N}$ & number of features \\
			$n\in\mathbb{N}$ & total number of samples \\
			$n^s\in\mathbb{N}$ & number of samples at site $s\in[S]$ \\
			$\A\in\mathbb{R}^{n \times m}$ & complete data matrix \\
			$\A^s\in\mathbb{R}^{n^s \times m}$ & subset of data available at site $s\in[S]$\\
			$\U\in\mathbb{R}^{n\times k}$ & an orthogonal matrix with $\thespan(\U)=\thespan(\A)$ \\
			$\U^s\in\mathbb{R}^{n^s\times k}$ & the sub matrix of $\U$ available at sites $s \in [s]$ \\
			$\Q\in\mathbb{R}^{n\times k}$ & an orthonormal matrix with $\thespan(\Q)=\thespan(\A)$ \\
			$\Q^s\in\mathbb{R}^{n^s\times k}$ & the sub matrix of $\Q$ available at sites $s \in [s]$ \\
			$\R\in\mathbb{R}^{k\times k}$ & an upper triangular matrix\\
			\bottomrule
		\end{tabular}
	\end{table}
	
	\subsection{Secure aggregation}
	\label{sec:qr:preliminaries:secure-aggregation}
	The secure aggregation scheme used in this work relies on the additive aggregation protocol used by \cite{cramer2015secure}. It assumes honest-but-curious participants, \ie all clients perform the computations following the protocol but try to infer as much information as possible from the exchanged parameters \cite{Snyder2014YaoS}. All $s$ clients create $i=S$ random shares $x_{s,i}$ of their secret value $x_s$ such that $\sum_{i}^{S} x_{s,i} \mod p = x_s$, where $p$ is a large prime known to all participants. One can think of the \enquote{s} in $x_{s,i}$ as the source of the share and \enquote{i} the destination. All clients send the respective shares $x_{s,i}$ to the respective recipients $i$ where the sum of the shares is computed as $\sum_{s}^{S} x_{s,i} = x_i$. None of the shares disclose any information on the original values. Lastly, the clients announce their aggregated secret share $x_i$, such that the global sum $x = \sum_{i}^{S} x_i \mod p$  of all private shares can be formed. This scheme is suitable for a cross-silo federated learning systems with reliable clients (\ie they do not randomly drop out) and relatively few participants. Other secure aggregation schemes, such as Shamir's protocol, which are more fault tolerant to client dropout \cite{cramer2015secure} could be used instead without conceptual change of the algorithm. 
	
	\subsection{Centralized QR decomposition}
	\label{fed:qr:preliminaries:centralized-qr}
	The QR decomposition is the factorization of a square matrix into a square orthonormal matrix $\Q$ and an upper triangular matrix $\R$.
	
	\begin{equation}
	\A = \Q\R
	\end{equation}
	
	It exists also for non-square matrices (reduced QR decomposition) which is significantly more memory efficient if $n>m$. Three popular schemes exist for the computation of the decomposition, the Householder, Givens and Gram-Schmidt algorithms. In centralized systems, the Householder algorithm and Givens rotation are more popular, because they do not suffer numerical instability as the canonical version of Gram-Schmidt orthonormalization. Generally, Householder reflection is more efficient, and preferred unless the matrices are sparse or parallel compute architecture can be used \cite{Sameh1971}. See \cite{linalgFord} for more details on the algorithms.
	
	\subsubsection{Householder transformation}
	\label{sec:qr:prelim:householder}
	The Householder reflection proceeds column-wise, setting all the elements below the diagonal to 0 using a Householder reflector. Therefore, it requires $m-1$ Householder reflections to form an upper triangular matrix $\R$ starting from a matrix $\A \in \mathbb{R}^{n \times m}$. A Householder reflector is defined as
	
	\begin{equation}
	\Q^u = I-\frac{2\u\u^\top}{\u^\top \u}, \u \ne 0
	\end{equation}
	
	For each column vector $\a_i$ in matrix $\A$ the Householder reflection is computed using the following steps. First, $\a$ is normalized as $\a_i$ = $\frac{\a_i}{||\a_i||_\infty}$ to avoid numerical overflow. Then the vector $\u$, \ie the vector required for the construction of the Householder reflector, is computed by $\u_i = \a_i \pm ||\a_i||_2\cdot\e$, where $\e = \begin{bmatrix} 1 &0 &\cdot & 0\end{bmatrix}^\top \in \mathbb{R}^{m\times 1}$ denotes a vector of length $m$ containing a 1 in the first position and 0 otherwise. For ease of notation, the scaling factor $ \frac{2}{\u^\top\u}$ is denoted $\beta$. The Householder reflection is computed implicitly to increase the computational performance. The resulting matrix $\Q^u\A = \A - \beta\u\u^T\A$ contains 0 in the column corresponding to vector $\a_i$. \Cref{alg:qr:householder} summarizes the a single Householder reflection. In the Householder QR algorithm this operation is performed for all vectors $\a_i$ of $\A$, transforming in each step only the sub matrix, which is not yet upper triangular by choosing the remaining reflection matrix as the identity matrix $I$.  The full description of the Householder algorithm is shown in \Cref{alg:qr:householder-algorithm}.

	\begin{algorithm}[ht]
	\small
		\caption{Householder reflection}
		\label{alg:qr:householder}
		
		\KwInput{Data matrix $\A^s\in\mathbb{R}^{m \times m}$}		
		$\bar{\a} = \frac{\A_{\bullet, i}}{||\A_{\bullet, i}||_\infty}$\;
		$\u = \bar{\a} + sgn(\A_{1, i}) \cdot ||\bar{\a}||_2\cdot\e$\;
		$\Q_i\A = \A - \beta\u\u^T\A$\;
		\Return$ \A, \u$\;
	\end{algorithm}
	
		\begin{algorithm}
		\small
		\caption{Householder algorithm}
		\label{alg:qr:householder-algorithm}
		$\R \gets \A$
		$\Q \gets \textbf{I}$\;
		\KwInput{Data matrix $\A^s\in\mathbb{R}^{m \times m}$}	
		\For{$i \in m-1$}{
			\tcp{Update relevant entries of \R}
			$\R_{i:m, i:m}, \u \gets \householder(\R_{i:m, i:m})$\;
			\tcp{Implicit multiplication of Q}
			$\Q_{1:m, i:m} \gets \Q_{1:m, i:m} -(\frac{2}{||\u||_2^2})\Q_{1:m, i:m} (\u\u^\top)$\;
		}	
		\Return $\Q, \R$\;
	\end{algorithm}
	
	\subsubsection{Givens rotation} 
	\label{sec:qr:prelim:givens}
	Givens QR algorithm sequentially sets subdiagonal elements of the matrix $\A \in \mathbb{R}^{m \times n}$ to 0 by multiplying the matrix with the corresponding \enquote{Givens matrix} \cite{linalgFord}. After $\frac{n\cdot (m-1)}{2}$ operations all elements below the diagonal are 0 resulting in an upper triangular matrix $\R$. Through careful choice of the parameters in the Givens matrices, their product results in an orthogonal matrix $\Q$ which is the desired result.
	
	A Givens matrix has the following form, where $i$ and $j$ are the indices of $c$ and $s$.
	\begin{equation}
	\label{eq:givens-matrix}
	J(i,j,c,s) = 
	\begin{bsmallmatrix}
	1  & \cdots & 0 & \cdots & 0& \cdots & 0 \\
	0  & \ddots &  &  & &  & \vdots \\
	0   &  & c & \cdots & s&  & 0 \\
	\vdots  &  & \vdots & \ddots & \vdots&  & \vdots \\
	0 &   & -s & \cdots & c&  & 0 \\
	\vdots  &  &  &  & & \ddots & 0 \\
	0 &  \cdots & 0 & \cdots & 0& 0 & 1 \\
	\end{bsmallmatrix}
	\end{equation}
	$J(i,j,c,s)$ is orthogonal if $c^2+s^2=1$.
	
	Let $\A$ be the matrix of interest and $i$ and $j$ with $i<j$ indices of the element to be set to 0. Then one can set 
	\begin{equation}
	c = \frac{x_{i,i}}{\sqrt{x_{i,i}^2+x_{j,i}^2}}
	\end{equation}
	and 
	\begin{equation}
	s = \frac{x_{i,j}}{\sqrt{x_{i,i}^2+x_{i,j}^2}}
	\end{equation}
	
	and compute the respective Givens matrix according to \Cref{eq:givens-matrix}.
	
	Then $\A' = J(i,j, c,s)\A$ contains a 0 at position $(i,j)$.The product of a Givens matrix with a general matrix can be computed efficiently, by updating only rows $i$ and $j$ of the matrix as 
	
	\begin{equation}
	\A_{i,\bullet} =  [ca_{i,1}+sa_{j,1},ca_{i,2}+sa_{j,2},\cdots, ca_{i,m}+sa_{j,m}]
	\label{eq:givens-1}
	\end{equation} 
	and 
	\begin{equation}
	\A_{j,\bullet} =  [ca_{j,1}+sa_{i,1},ca_{j,2}+sa_{i,2},\cdots, ca_{j,m}+sa_{i,m}]
		\label{eq:givens-2}
	\end{equation} 
	
	The full QR decomposition in a centralized setting is summarized in \Cref{alg:qr:givens}.
	
	\begin{algorithm}[ht]
	\small
		\caption{QR factorization using Givens rotation}
		\label{alg:qr:givens}
		
		\KwIn{Data matrix $\A^s\in\mathbb{R}^{n \times m}$}
		\ForEach{$i \in [1, ..., m-1]$}{
			\ForEach{$j \in [i+1, ...,m]$}{
				$[s,c] \gets \givens()$\;
				$\A = J(i,j,c,s)\A$\;
				$\Q = J(i,j,c,s)\Q$\;
			}
		}
		$\R = A$\;
		$\Q  = \Q^\top$
		\Return $\Q,\R$
		
	\end{algorithm}
	
	\subsubsection{Gram-Schmidt orthonormalization}
	\label{sec:qr:prelim:cent-gs}
	The Gram-Schmidt algorithm produces an orthonormal matrix $\Q = [\q_1 \dots \q_k]$ and an upper triangular matrix $\R=[\r_1\dots \r_k]$ \cite{Beezer}. With a matrix $\A=[\a_1\dots\a_k]\in\mathbb{R}^{n\times m}$ of $m$ linearly independent column vectors, the matrix  $\U=[\u_1\dots\u_k]\in\mathbb{R}^{n\times m}$ of orthogonal column vectors is computed, such that it has the same span as $\A$.  Let $r_{i,j}=\u_j^\top\a_i/n_j$ and $n_j=\u_j^\top\u_j$ then
	\begin{equation}
	\u_i = \begin{cases}
	\a_i & \text{if }i=1\\
	\a_i - \sum_{j=1}^{i-1}  r_{i,j} \cdot \u_j & \text{if }i\in[k]\setminus\{1\}
	\end{cases}\text{,}
	\label{eq:qr-proj}
	\end{equation}
	
	\begin{equation}
	\q_{i} = \frac{\u_j}{||\u_j||} 
	\label{eq:qr:q}
	\end{equation}
	
	\begin{equation}
	\r_{j,i} = \begin{cases} 
	\q_j  \cdot \a_i & \text{if $j \leq i$}\\
	0 & \text{if $ j > i$}\\
	\end{cases}
	\label{eq:qr:r}
	\end{equation}
	
\subsection{Centralized Singular Value Decomposition}
Singular value decomposition (SVD) is a matrix decomposition frequently used in data mining applications. A matrix $\A$ is decomposed into two orthonormal matrices of singular vectors $\U$ and $\V$ and a diagonal matrix $\Sigma$ containing the singular values in non-increasing order $\A = \U\Sigma\V^{\top}$ \cite{linalgFord}. In the federated domain, SVD has been studied extensively, and multiple algorithms exist (\eg\cite{Bai2005, Hartebrodt2021, chai2019maskingsvd}). Given the vertically distributed matrix $\A^s \in \mathbb{R}^{m \times n^s}$ with dimension $m\times n^s$ at sites $s$ the federated singular value decomposition is defined as
	\begin{equation}
	\A^s  = \U\Sigma\V^{s\top}
	\end{equation}
	where $\U$ is the full left singular vector and $\V^s$ are the partial right singular vectors. The right singular vectors should not be shared due to potential privacy breaches \cite{Hartebrodt2021}.
	
	\subsection{Solution of systems of linear equations}
	\label{sec:qr:linalg-cent}
	In centralized computation, QR factorization can be used to compute the solution of systems of linear equations. Given a system $\A\x=\b$, one can compute $\A =\Q\R$. By setting $\Q\R\x = \b \Leftrightarrow \R\x = \Q^{-1}\b$ the system can be solved efficiently because due to the orthonormality of $\Q$, $\Q^{-1} = \Q^{\top}$ and $ \y = \Q^{-1}\b$ can be computed. This leaves to solve a system of the form $\R\x = \y$, which can be solved efficiently as $\R$ is an upper triangular matrix \cite{Beezer}. This can be used for instance for linear regression \cite{Nasirigerdeh2022}.

		\section{Related work}
	\label{sec:qr:related}
	Federated QR algorithms have been suggested mainly in the field of peer-to-peer networks relying on the PushSum algorithm and gossiping \cite{Sluciak2012, Sluciak2016, Strakov2012}. While these schemes can be implemented in a modern federated learning system, the assumptions governing FL make these algorithms unsuited. Notably, in cross-device FL, the client-to-client communication is assumed to be a bottleneck \cite{Kairouz2021} and client-aggregator communication is preferred. Secondly, cross-silo FL assumes more data and higher compute power at the nodes, so local computational constraints do not impact the computations as severely. In medical systems, practitioners might want to avoid approximation errors at the cost of higher compute time \cite{Yoo2021}. In distributed memory contexts, diverse schemes have been proposed to efficiently and quickly compute the QR decomposition (e.g. \cite{Song2010}). In these systems, there is usually one owner of the data, so we are unaware of privacy analyses in this context. In the outsourced, encrypted domain, the work of \cite{Zhang2019, Luo2017} still suggest very high expected execution times due to the encryption or masking overhead.
		\section{Federated QR Decomposition}
	\label{sec:qr:federated-qr}
	\begin{figure}[hb]
		\centering
			\begin{tikzpicture}[node distance=0.6cm]
	
	\node (A-vert) {\tikz\draw[step=0.25] (0,0) grid (1.5,-1);};
	
	\node[above = of A-vert, anchor=north, yshift=0.55cm] (A-vert-2) {\tikz\draw[step=0.25] (0,0) grid (1.5,1);};
	\node[fill=white,font=\Large,inner sep=1pt] (A-vert-label) at (A-vert-2) {\A};
	\node[above = of A-vert-2, anchor=north, yshift=0.05cm] (A-vert-3) {\tikz\draw[step=0.25] (0,0) grid (1.5,0.5);};
	\node[anchor=south,font=\small,yshift=-.1cm] at (A-vert-3.north) {$d$};
	\node[anchor=east,font=\small,yshift=-.1cm, xshift=0.2cm] at (A-vert.west) {$n_1$};
	\node[anchor=east,font=\small,yshift=-.1cm, xshift=0.2cm] at (A-vert-2.west) {$n_2$};
	\node[anchor=east,font=\small,yshift=-.1cm, xshift=0.2cm] at (A-vert-3.west) {$n_3$};
	
	\node[right = of A-vert, xshift=0.25cm] (Q-vert) {\tikz\draw[step=0.25] (0,0) grid (1.5,-1);};
	
	\node[above = of Q-vert, anchor=north, yshift=0.55cm] (Q-vert-2) {\tikz\draw[step=0.25] (0,0) grid (1.5,1);};
	\node[fill=white,font=\Large,inner sep=1pt] (Q-ckassic-label) at (Q-vert-2) {\Q};
	\node[above = of Q-vert-2, anchor=north, yshift=0.05cm] (Q-vert-3) {\tikz\draw[step=0.25] (0,0) grid (1.5,0.5);};
	\node[anchor=south,font=\small,yshift=-.1cm] at (Q-vert-3.north) {$d$};
	\node[anchor=east,font=\small,yshift=-.1cm, xshift=0.2cm] at (Q-vert.west) {$n_1$};
	\node[anchor=east,font=\small,yshift=-.1cm, xshift=0.2cm] at (Q-vert-2.west) {$n_2$};
	\node[anchor=east,font=\small,yshift=-.1cm, xshift=0.2cm] at (Q-vert-3.west) {$n_3$};


	\node[right = of Q-vert-3, yshift=-0.5cm, xshift=-0.5cm] (R-vert) {\tikz\draw[step=0.25] (0,0) grid (1.5,-1.5);};
	\node[fill=white,font=\Large,inner sep=1pt] (R-label) at (R-vert) {\R};
	\node[anchor=south,font=\small,yshift=-.1cm] at (R-vert.north) {$d$};
	
	\end{tikzpicture}
	
		\caption{Schematic QR decomposition with 3 participants. \A and \Q remain private. \R is known to all participants.}
		\label{fig:qr-schematic}
	\end{figure}
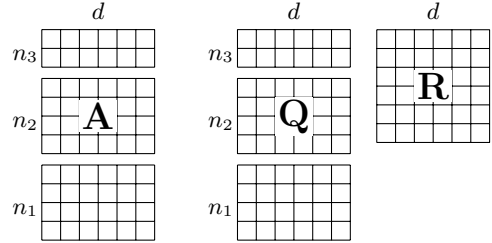
	
	In this section, we describe and analyse approaches to federate QR factorization. To our knowledge, there exist no descriptions of federated versions of the  Householder reflection or Givens rotation-based algorithms. Therefore, we first provide descriptions of the federated algorithms and demonstrate that they are not suitable for the chosen federated setting. Lastly, we describe the extended Gram-Schmidt algorithm which also returns the upper triangular matrix $\R$. Recall that we assume the data $\A \in \mathbf{R}^{n \times m}$ to be partitioned row-wise into chunks $\A^s \in \mathbf{R}^{n^s \times m}$. The goal of federated QR decomposition is to compute $\Q^s$ and $\R$ such that $\A^s$ and $\Q^s$ stay confidential, meaning the raw data does not leave site $s$ and $\Q$ can only be computed at $s$. $\R$ is common to all sites. 
	
	\subsection{Federated Householder algorithm}
	\label{sec:qr:fed-householder}
	We describe a straightforward algorithm for a federated Householder reflector. This subroutine could be used to compute the full QR decomposition in a federated manner. Let $t_s$ be the row index set of $\A^s$ at site $s$.

	\newcommand\tikzmk[1]{%
		\tikz[remember picture,overlay]\node[inner sep=2pt] (#1) {};}
	\newcommand\boxit[2][]{\tikz[remember picture,overlay]{\node[yshift=4pt, xshift=4pt, fill=#1,opacity=.3,fit={(A)($(B)+(#2\linewidth,.8\baselineskip)$)}] {};}\ignorespaces}

	\setlength{\fboxsep}{1pt}
	\colorlet{royal}{gray!70}

	\begin{algorithm}
	\small
		\caption{Federated Householder reflection \textcolor{royal!200}{Client-side computations are marked in gray.}}
		\label{alg:qr:federated-householder}
		
		\KwInput{Data matrices $\A^s \in \mathbb{R}^{n^s \times m}$ at sites $s\in[S]$.}
		\tcp{$t_s$ is the index set for the rows of $\A^s$}
		\tcp{Compute the global max element $||\A_{\bullet, i}||_\infty$ }
		\tikzmk{A}
		$m^s \gets \toAgg(||\A_{t_s, i}||_\infty)$\; \label{alg:householder:infinity-local}
		\tikzmk{B}
		\boxit[royal]{0.94}
		$||\A_{\bullet, i}||_\infty = \max_{s \in [S]} m^s$\; \label{alg:householder:infinity-global}
		\tikzmk{A}
		\tcp{Run at all clients}
		\For{$s\in[S]$}{
		\label{alg:householder:a-bar}
		\tcp{Compute the local portion of $\u$}
		$\bar{\a}^s \gets \frac{\A_{t_s, i}}{||\A_{\bullet, i}||_\infty}$\; 
		\tcp{Compute the norm of $\bar{\a}$}
		$n^s \gets \bar{\a}^s\top \bar{\a}^s$\;}
		\tikzmk{B}
		\boxit[royal]{0.94}
		$||\bar{\a}|| \gets \sqrt{\sum_{s} n^s}$\;
		\tikzmk{A}
		$\u^s \gets \bar{\a^s} \pm ||\bar{\a}||_2\cdot\e$\; 
		\tikzmk{B}
		\boxit[royal]{0.94}
		\label{alg:householder:u-local}
		\tcp{Oracle step: stack $\u^s$ to form $\u\u^\top$}
		$\u \gets \stackVert([\u_1,\cdots \u^s])$ \;\label{alg:householder:u-global}
		\tikzmk{A}
		\tcp{Update A}
		\For{$s\in[S]$}{
		$\H^s_A = \beta\u\u^T\A^s$\$\;
		}
		\tikzmk{B}\boxit[royal]{0.94}
        $\H_A = \sum_S \H^s_A$\;
        \tikzmk{A}
        \For{$s\in[S]$}{
		$\A^s= \Q^s_i\A^s = \A^s -\H_A$\;
		}\label{alg:householder:reflection}
		\tikzmk{B}\boxit[royal]{0.94}
		\Return $\A^s$, $\u$\;
	\end{algorithm}

	In \Cref{alg:qr:federated-householder} we describe a federated Householder reflector. The algorithm proceeds column wise. Initially, the global infinity norm $||\A_{\bullet,i}||_\infty$ is computed as the max over all local infinity norms (line 2) and the norm of the scaled vector is computed (line 4 to 7). Then, the clients locally compute $\u$ (line 8). In order to compute the Householder reflector, the clients send their partial vectors $\u^s$ to the aggregator (line 10). We call this step an \enquote{oracle step} to indicate that under the chosen secure computation paradigm, the aggregation itself cannot be performed privately. The reflection matrix $\H_A$ needs to be computed collaboratively by adding up the local shares (line 14 to 16).  Finally, at the clients, the reflection is performed (line 16). 
	
	Ad-hoc, this naive federated implementation of the procedure would take four communication rounds per column vector, one for the computation of the maximal element, one for the computation of the norm, and two for the computation of the reflector $\H_A$ and the reflection.
	
	In the federated setting, the computation of the Householder reflector itself is immediately problematic regarding the confidentiality of the data. Recall that algorithm relies on the computation of the outer product of $\u$ which is a direct transformation of the original column vectors of $\A$. In step 6, we call this operation an oracle step because it cannot be performed using the SMPC scheme we choose. Furthermore, even if secure multiplication is used, this \enquote{summary statistics} constitutes a privacy breach because the diagonal of $\u\u^\top$ contains the squared entries of $\u$. If $\u$, and $||\a||_\infty$ are known, then the original vector $\A_{\bullet, i}$ can be reconstructed. (We still assume, that all aggregate statistics are known, but we assume that the 'oracle step' can be computed using secure multiplication.)
	
	\begin{proposition}
	\label{prop:householder}
	Assuming that only aggregate statistics, excluding $\u$, but including the Householder projector $\H_A = \beta\u\u^\top$ become known in clear text to any of the participants, they can reconstruct the entire input data based on the summary statistics they know.
	\end{proposition}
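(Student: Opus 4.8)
The plan is to exploit the rank-one structure of the disclosed projector $\H_A=\beta\u\u^{\top}$ together with the two disclosed scalar aggregates — the column infinity norm $\|\A_{\bullet,i}\|_{\infty}$ and the $2$-norm $\|\bar{\a}\|_{2}$ of the scaled column (available as $\sqrt{\sum_{s}n^{s}}$, equivalently $\|\A_{\bullet,i}\|_{2}/\|\A_{\bullet,i}\|_{\infty}$) — and the rigid additive form $\u=\bar{\a}+\operatorname{sgn}(\A_{1,i})\,\|\bar{\a}\|_{2}\,\e$. The first observation is that $\H_A$ alone does \emph{not} reveal $\u$: since $\beta=2/\|\u\|_{2}^{2}$, the matrix $\beta\u\u^{\top}$ is invariant under $\u\mapsto c\u$ and equals $2\hat{\u}\hat{\u}^{\top}$ with $\hat{\u}:=\u/\|\u\|_{2}$. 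Thus the projector fixes only the \emph{direction} of $\u$ up to a sign — $\hat{\u}$ is the unit eigenvector for the eigenvalue $\operatorname{tr}\H_A=2$, equivalently the normalisation of any nonzero column of $\H_A$ — and the content of the proof is that this direction plus the two scalars already determines the data.

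Next I would recover the processed column. Writing $\u=c\,\hat{\u}$ with $c$ unknown, using that $\u$ and $\bar{\a}$ coincide in all coordinates but the first while $u_{1}=\bar{a}_{1}+\operatorname{sgn}(\bar{a}_{1})\rho$ with $\rho:=\|\bar{\a}\|_{2}$ known, the constraint $\|\bar{\a}\|_{2}^{2}=\rho^{2}$ reads $(c\hat{u}_{1}-\operatorname{sgn}(\bar{a}_{1})\rho)^{2}+c^{2}(1-\hat{u}_{1}^{2})=\rho^{2}$, which collapses to $c^{2}=2\operatorname{sgn}(\bar{a}_{1})\rho\,c\,\hat{u}_{1}$, hence $c=2\operatorname{sgn}(\bar{a}_{1})\rho\,\hat{u}_{1}$ and then $\bar{a}_{1}=\operatorname{sgn}(\bar{a}_{1})\rho(2\hat{u}_{1}^{2}-1)$, $\bar{a}_{j}=2\operatorname{sgn}(\bar{a}_{1})\rho\,\hat{u}_{1}\hat{u}_{j}$ for the remaining coordinates. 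A one-line computation gives $\hat{u}_{1}^{2}=\tfrac12+|\bar{a}_{1}|/(2\rho)\geq\tfrac12$, which makes these formulas self-consistent and shows the reconstructed $\bar{\a}$ is unchanged under $\hat{\u}\mapsto-\hat{\u}$; the only residual freedom is the single sign $\operatorname{sgn}(\A_{1,i})$, which is harmless for a privacy claim and is in any case pinned down by $\operatorname{sgn}(\R_{i,i})$ once $\R$ is public. Unscaling, $\A_{\bullet,i}=\|\A_{\bullet,i}\|_{\infty}\,\bar{\a}$ returns the column exactly.

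To obtain the \emph{entire} input I would propagate this through the $m-1$ steps of \Cref{alg:qr:householder-algorithm}. At step $j$ the reflector is $P_{j}=\mathbf{I}-\H_A^{(j)}$, which is known because the step-$j$ projector $\H_A^{(j)}$ is disclosed, and $P_{j}$ is symmetric and orthogonal, so $P_{j}^{-1}=P_{j}$. Applying the step-$i$ reconstruction recovers the $i$-th column of the partially reduced matrix $P_{i-1}\cdots P_{1}\A$; left-multiplying by $P_{1}\cdots P_{i-1}$ returns the $i$-th column of $\A$. Equivalently and more compactly, $\R=P_{m-1}\cdots P_{1}\A$ gives $\A=P_{1}\cdots P_{m-1}\R$ with every factor known and $\R$ public, so all $\A^{s}$ are reconstructed.

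The main obstacle is exactly the scale invariance of $\H_A$: the projector by itself yields only the line spanned by $\u$, so the proof must genuinely bring in the auxiliary norms $\|\A_{\bullet,i}\|_{\infty}$ and $\|\bar{\a}\|_{2}$ and use the rigid structure $\u=\bar{\a}\pm\|\bar{\a}\|_{2}\e$ to fix $\|\u\|_{2}$, and with it $\bar{\a}$. Everything else — resolving the sign, the degenerate case $\bar{a}_{1}=0$ where $\hat{u}_{1}^{2}=\tfrac12$, and the composition of reflectors across columns — is routine linear algebra.
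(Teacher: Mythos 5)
Your proof is correct, but it takes a genuinely different route from the paper's. The paper's argument is participant-centric: it assumes $\beta$ is disclosed alongside $\H_A$, divides it out to obtain $\u\u^\top$, reads off $|u_k|$ from the diagonal, and then uses the off-diagonal products $u_j u_k$ together with the attacker's \emph{own locally computed share} $\u^s$ (whose signs the attacker already knows) to fix the signs of every other participant's entries; inverting the affine map $\bar{\a}\mapsto\u$ then yields $\A_{\bullet,i}$. Your argument never touches the attacker's local share: you observe that $\H_A=2\u\u^\top/\|\u\|_2^2$ is invariant under rescaling of $\u$ (indeed $\operatorname{tr}\H_A\equiv 2$, so $\beta$ cannot even be recovered from $\H_A$ alone), and you restore the lost scale from the structural constraint $\|\bar{\a}\|_2=\rho$ built into the construction $\u=\bar{\a}\pm\rho\e$; your computation $c=2\operatorname{sgn}(\bar{a}_1)\rho\,\hat{u}_1$ and the identity $\hat{u}_1^2=\tfrac12+|\bar{a}_1|/(2\rho)$ check out. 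This buys three things the paper's proof does not: the attack survives even if $\beta$ is withheld; it is available to the aggregator and not only to a data-holding client, up to the single residual sign $\operatorname{sgn}(\A_{1,i})$; and you make the cross-column bookkeeping explicit via $\A=P_1\cdots P_{m-1}\R$ with each $P_j=I-\H_A^{(j)}$ symmetric and involutory, whereas the paper treats a single column and leaves the propagation to later reflections implicit. In exchange, the paper's sign-propagation step is shorter and needs no case analysis at $\bar{a}_1=0$. Both arguments establish the proposition.
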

	\begin{proof}
	Recall that during the computation of the reflector, $m$ and $n$ the maximal element and norm of $\A_{\bullet,i}$; $\beta$, the scaling factor and $\u\u^\top$ become known to the participants. The diagonal of $\u\u^\top$ contains the squared elements of $\u$, which can hence be computed up to the sign. Using the fact, that every participant can compute their share $\u^s$ of $\u$, it is possible to infer the sign of $\u$ for all participants using the off diagonal entries of $\u\u^\top$. For two sites $s$, and $s'$, with index set $t_{s'}$ denoting the entries belonging to $s'$ in $\u\u^\top$, we can compute the sign of $\u^{s'}$ at site $s$: $sgn(\u^{s'}) = sgn(\u^s_1) \cdot sgn(\u\u^\top_{t_{s'}, 1})$. Once the sign of $\u$ is known, the linear transformations can be reversed and $\A_{\bullet,i}$ becomes known.
	\end{proof}

	Therefore, it is not straightforward to privately compute the Householder transform using hybrid federated learning with secure aggregation. Knowledge of the procedures allows the reverse engineering of the data. This can potentially be prevented by performing the entire computation under homomorphic encryption, or an SMPC scheme which allows the evaluation of arbitrarily complex circuits. When using SMPC, it would not be sufficient to compute the outer product securely, the intermediate parameters cannot become known to any of the computing parties. Based on the incompatibility of the Householder reflection with secure aggregation, we exclude federated Householder reflection from further considerations.
	
	\subsection{Federated Givens rotation}
	\label{sec:qr:fed-givens}
	In this section, we describe a direct translation of a Givens rotation to a federated setting. Again, we only describe the relevant subroutine which would allow the implementation of the complete QR decomposition, albeit inefficiently. Realistically, one would choose a parallelized version of the operator.
	
	\begin{algorithm}[ht]
		\caption{QR factorization using Givens rotation \textcolor{royal!200}{Client-side computations are marked in gray.}}
		\label{alg:qr:givens-federated}
		
		\KwInput{Data matrices $\A^s \in \mathbb{R}^{n^s \times m}$ at sites $s\in[S]$.}
	\tcc{Perform local precomputations, setting all possible elements to $0$, send all non-zero indices to the aggregator}
		
		\ForEach{$i \in [1, ..., m-1]$}{
			\ForEach{$j \in [i+1, ...,m]$}{
				\tcp{Compute $c$ and $s$, using values from two clients $k_1,k_2 \in [S]$}
				$\toClient(i, j)$\; \label{alg:givens:announce}
				\tikzmk{A}
				$x_{i,i} \gets \toAgg(a^{k_1}_{\a_{i,i}})$\; \label{alg:givens:xi}
				$x_{j,i} \gets \toAgg(a^{k_2}_{\a_{j,i}})$\; \label{alg:givens:xj}
				\tikzmk{B}
				\boxit[royal]{0.82}
				$c = \frac{x_{i,i}}{\sqrt{x_{i,i}^2+x_{j,i}^s}}$\; \label{alg:givens:c}
				$s = \frac{x_{i,j}}{\sqrt{x_{i,i}^2+x_{i,j}^s}}$\; \label{alg:givens:s}
				\tikzmk{A}
				\tcp{Exchange the relevant entries required for the rotation}
			    $\toAgg [sa_{j,1},sa_{j,2},\cdots, sa_{j,m}]$\;
                $\toAgg ([sa_{i,1},sa_{i,2},\cdots,sa_{i,m}])$\;
                \tcp{Perform rotation following \Cref{eq:givens-1} and \ref{eq:givens-2}}
				$\A^s = J(i,j,c,s)\A^s$\; \label{alg:givens:update-A}
				$\Q^s = J(i,j,c,s)\Q^s$\; \label{alg:givens:update-Q}
				\tikzmk{B}\boxit[royal]{0.82}}}
		$\R = \A$\;
		$\Q  = \Q^\top$\;
		\Return $\Q,\R$
		
	\end{algorithm}

	\Cref{alg:qr:givens-federated} summarizes the federated procedure described in the following. As precomputations, the clients perform Givens rotations to set all elements to 0 which only depend on their data. Then, the clients communicate all remaining non-zero indices below the diagonal to the aggregator. Setting an element to $0$ requires only two rows $i$, and $j$ to be manipulated. The clients associated with these rows are called $k_1$ and $k_2$. In the main loop, the aggregator announces the current $i$ and $j$ to the current clients $k_1$ and $k_2$ (\Cref{alg:givens:announce}). Client $k_1$ and $k_2$ compute and announce the Givens parameters $s$ and $c$ in collaboration with the aggregator (\Crefrange{alg:givens:c}{alg:givens:s}). This is an \enquote{oracle step}, as this implies the communication of $x_i$ and $x_j$, because the Givens parameters cannot trivially be to compute using secure addition. The aggregator announces $c$ and $s$ to $k_1$ and $k_2$ and the clients update $\R$ and $\Q$. The broadcast can be combined with the new index broadcast (\Cref{alg:givens:announce}) if applicable. \Crefrange{alg:givens:announce}{alg:givens:update-Q} are repeated until all elements below the diagonal are $0$.
	
	The naive implementation of this procedure would require in the order of  $N=\mathcal{O}(\frac{2\cdot n \cdot (m-1)}{2})$ transmission rounds. Each element, would required an index broadcast and a Givens parameter broadcast. The procedure can be parallelized to zero out $\frac{n}{2}$ elements per round \cite{Sameh1971}, reducing the communication complexity to $\mathcal{O}(n)$. Local precomputation would decrease the effective number of transmission costs. Furthermore, the index broadcast can most likely be done with fewer communications rounds.
	
	However, there is a critical privacy breach when using Givens rotations. Recall that we assume the data to partitioned into $s$ partitions $A^s \in \mathbb{R}^{n^s \times m}$. Assuming rows $i$ and $j$ are located in silo $S_1$ and $S_2$ respectively, the aggregator can compute the values $c$ and $s$ using $x_i$ and $x_j$  (\cf \Cref{eq:givens-1}, \Cref{eq:givens-2}). Even if $c$ and $s$ are computed using SMPC and P2P communication (so that the aggregator does not gain knowledge of the parameters), $x_i$ and $x_j$ can be reconstructed at the current clients $k_1$ and $k_2$. 
	
	\begin{proposition}
	\label{prop:givens}
	   A client $S$ can reconstruct an entire row of the sub matrix $\A^{S'}$ of another participant $S'$, if the Givens parameters as well as $\a^0_i$ and  $\a^1_i$, the rows of $\A^S$ before and after the update are available in clear text at client $S$.
	\end{proposition}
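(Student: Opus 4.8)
The plan is to invert, column by column, the single Givens step that relates $\a^0_i$ to $\a^1_i$. The structural fact that makes this work is that $J(i,j,c,s)$ agrees with the identity outside rows $i$ and $j$, so the only rows of the global matrix $\A$ that this step alters are row $i$ --- which, by hypothesis, is exactly the row of $\A^S$ that client $S$ observes before and after the rotation --- and row $j$, which belongs to the other participant $S'$. Write $\a^0_j$ for row $j$ of $\A$ immediately before the step (an entire row of $\A^{S'}$) and $\a^1_j$ for row $j$ immediately afterwards.

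First I would expand the rotation acting on row $i$ entry-wise. Reading off \Cref{eq:givens-1}, for every column index $\ell \in [m]$,
\begin{equation}
a^1_{i,\ell} = c\,a^0_{i,\ell} + s\,a^0_{j,\ell} .
\label{eq:givens-inversion}
\end{equation}
Client $S$ knows the scalars $c$ and $s$ and both $a^0_{i,\ell}$ and $a^1_{i,\ell}$, so the sole unknown in \Cref{eq:givens-inversion} is $a^0_{j,\ell}$.

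Second I would solve for it. In any step that actually transmits anything we have $s \neq 0$: the step is meant to annihilate the subdiagonal entry controlled by $x_{j,i}$, and $s = 0$ occurs precisely when that entry already equals $0$, in which case $J(i,j,c,s) = \mathbf{I}$, no row changes, and nothing is exchanged; hence we may assume $s \neq 0$. Then, for every $\ell \in [m]$,
\begin{equation}
a^0_{j,\ell} = \frac{a^1_{i,\ell} - c\,a^0_{i,\ell}}{s} ,
\end{equation}
so $S$ recovers the complete row $\A^{S'}_{j,\bullet}$ in clear text; substituting this into \Cref{eq:givens-2} additionally returns $\a^1_j$, so the whole pre- and post-rotation pair of rows of $\A^{S'}$ is exposed. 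The case in which $S$ holds row $j$ rather than row $i$ is identical after interchanging the roles of \Cref{eq:givens-1} and \Cref{eq:givens-2}.

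I do not anticipate a genuine obstacle: restricted to the two affected rows, the update is an invertible linear map whose parameters $(c,s)$ the proposition supplies, together with one input row and the matching output row, which overdetermines the remaining input row. The only points needing a little care are the degenerate case $s = 0$ (equivalently $c = \pm 1$), handled above by noting that it carries no information, and bookkeeping of which physical client owns row $i$ versus row $j$ together with the sign convention in \Cref{eq:givens-1} and \Cref{eq:givens-2}; both are routine. It is worth closing with the observation --- already flagged before the statement --- that this leak is intrinsic to releasing $c$ and $s$ alongside the pre- and post-rotation rows, so it is not removed by computing $c$ and $s$ under SMPC, mirroring the situation of \Cref{prop:householder}.
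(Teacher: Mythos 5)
Your proof is correct and follows essentially the same route as the paper's: both invert the Givens update on the observed row $i$ to solve $a^0_{j,\ell} = (a^1_{i,\ell} - c\,a^0_{i,\ell})/s$ for the other participant's row. You additionally handle the degenerate case $s=0$ and the row-ownership symmetry, which the paper leaves implicit, but the core argument is identical.
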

	
	\begin{proof}
	Let $a^0_i$ denote the $i$th row of $\A$ at client $k$ before the rotation and $a^1_i$ denote the $i$th row at client $k$ after the update. Given the Givens parameters $c$ and $s$, client $k$ can compute the $j$th row at client $k'$ as $a^0_j =[(a^1_{il}-c\cdot a^0_{il})/s]$ with $l$ the column index.
	\end{proof} 
	
	In order to prevent this breach, the whole algorithm would have to be performed under encryption, such that $S_1$ does not gain access to the intermediate matrices $A^{S'}$. These considerations render this algorithm unsuitable for hybrid federated learning with secure parameter aggregation. This leaves the Gram-Schmidt algorithm as the final possible algorithm.

	\subsection{Federated Gram-Schmidt Algorithm including the Computation of R}
	\label{sec:qr:fed-gs}
	Based on the algorithm described in \cite{Hartebrodt2021}, where we showed that the orthogonal matrix $\Q$ can be computed solely based on the exchange and aggregation of vector norms and co-norms, we extend the algorithm such that the $\R$ matrix can be computed simultaneously. This can be done without further communication steps in comparison to the previously presented method. The main modifications are that the orthonormal vectors in $\Q$ need to be computed right away in order to compute the inner product of $\q_l\a_{i-1}$ contained in the matrix $\R$ at position $l, i-1$. Note, that the procedure also requires the orthogonal vectors $\u_i$.

	Therefore, we develop a detailed description of a federated Gram-Schmidt orthonormalization procedure (see \Cref{alg:qr-federated-new}). First, the global vector norm $n_i$ of $\u_i$ is calculated by computing the local vector norms $n^s_i$ at the clients and aggregating them at the central server (\Crefrange{alg:gs:local-norms-start}{alg:gs:local-norms-end}). The main loop starts at index $i=2$ and proceeds in \num{4} stages. Let $\R$ be the upper triangular matrix completed up to vector $i \in d$. First, $\e_{i-1}^s$, is computed by dividing $\u_{i-1}$ through the global norm (\Crefrange{alg:gs:compute-q-1-start}{alg:gs:compute-q-1-end}). Then, the $i-1$st local column $\r^s_{l, i-1}$ of $\R$ is computed as the inner product of the partially normalized vector $\q_{i-1}^s$ and the partial data column $\a^s$ (\Crefrange{alg:gs:r-local-start}{alg:gs:r-local-end}). Then the local residuals $r^s_{ij}$ for vector $i$ \wrt to the previous $i-1$ vectors are computed (\Crefrange{alg:gs:residuals-start}{alg:gs:residuals-end}). In stage 2, the two parameters $\r^s_{l, i-1}$ and $r^s_{ij}$ are sent to the central server and aggregated via element-wise addition (\Crefrange{alg:gs:aggregated-start}{alg:gs:aggregated-end}) to form the global copy of $\R$ up until $i-1$. The global $\r_{l, i-1}$, and $r_{ij}$ are returned to the clients, where the orthogonal vector $\u^s_i$ is computed (\Crefrange{alg:gs:orthog-start}{alg:gs:orthog-end}). In the last stage, the norm of the current vector $\u_i$, $n_i$ is computed by summing up the local norms of $\u^s_i$ (\Cref{alg:gs:aggregate-vnorm}). The procedure is repeated for all $d$ vectors of $\A$. After exiting the main loop, the last column of \A is computed, and the partial orthonormal matrices $\Q^s$ and $\R$ are returned (\Crefrange{alg:gs:final-start}{alg:gs:final-end}). This procedure is equal to the centralized Gram-Schmidt algorithm because the vector inner products can be computed exactly in a federated fashion.

	\begin{algorithm}[ht!]
	\small
		\caption{Federated Gram-Schmidt. \textcolor{royal!200}{Client-side computations are marked in gray.}}
		\label{alg:qr-federated-new}
		\KwInput{Data matrices $\A^s \in \mathbb{R}^{n^s \times m}$ at sites $s\in[S]$.}
		\KwOutput{Partial matrices $\Q^s$ and full matrix $\R$ at sites $s\in[S]$ }
		\tikzmk{A}
		\tcp{Compute norm of first orthogonal vector.}
		\For{$s\in[S]$\label{alg:gs:local-norms-start}} 
		{$\u^s_1 \gets \a^s_1$\;  \label{alg:gs:utu-1}
			$n_1^s \gets {\u_1^s}^\top{\u_1^s}$\; \label{alg:gs:n-1}}
		
		$n_1 \gets \sum_{s=1}^S n_1^s$ \; \label{alg:gs:n-1-sum} \label{alg:gs:local-norms-end}
		
		\tcp{Orthogonalize all subsequent vectors.}
		\For{$i \in [d]\setminus\{1\}$}{
			\tikzmk{A}
			\tcp{For each client $s$}
			\For{$s \in [S]$}{\label{alg:gs:compute-q-1-start} 
				\tcp{Normalise to unit norm}
				$\q^s_{i-1} \gets \u^s_{i-1} /\sqrt{n_{i-1}}$ \;  \label{alg:gs:compute-q-1-end} 
				\tcp{Compute relevant entries for \R}
				\For{$l \in [i-1]$ \label{alg:gs:r-local-start}}
				{$\r^s_{l,i-1} \gets \q^{s\top}_{l} \a^s_{i-1}$\;\label{alg:gs:r-local-end}}

				\tcp{Compute client residuals for current vector.} \label{alg:gs:residuals-start}
					\For{$j \in[i-1]$}{
						$r_{ij}^s \gets {\u_j^s}^\top \a_i^s / n_j $\; \label{alg:gs:residuals-end}
					}
			}
			\tikzmk{B}
			\boxit[royal]{0.88}
			\tcp{Aggregate residuals}
			\For{$j\in [i-1]$}{
				\label{alg:gs:aggregated-start}
				$r_{ij} \gets \sum_{s=1}^Sr_{ij}^s$\; \label{alg:gs:agg-res}
				\tcp{Aggregate R}
				\For{$l \in [i]$}
				{$\r_{l,i-1} \gets \sum_{s=1}^S \a^s_{l,i-1}$\; \label{alg:gs:aggregated-end}
				}
			}	
			\tikzmk{A}
			\tcp{Orthogonalize vector and compute norm.}
			\For{$s \in [S]$}{
				\label{alg:gs:orthog-start}
				$\u_{i}^s \gets \a_{i}^s - \sum_{j =1}^{i-1} r_{ij} \cdot \u_{j}^s$\;
				$n_i^s \gets {\u_i^s}^\top \u_i^s$ \; \label{alg:gs:orthog-end}
			}
			\tikzmk{B}
			\boxit[royal]{0.88}	
			$n_i \gets \sum_{s=1}^S n_i^s$ \; \label{alg:gs:aggregate-vnorm}
		}
		\tikzmk{A}
		\For{$s\in[S]$ }{\tcp{Compute last column of R}
			\label{alg:gs:final-start}
			$\q^s_{d} \gets \u^s_{d} /\sqrt{n_{i-1}}$\; 
			\For{$l \in [k]$}
			{$\r_{ld} \gets \q^s_{d} \a^s_{l}$\;}
			$\Q^s = [\q^s_1 \cdots \q^s_d]$ \;
			\textbf{Return} $\Q^s, \R$\; \label{alg:gs:final-end}}
		\tikzmk{B}
		\boxit[royal]{0.94}
	\end{algorithm}
	
	\subsection{Privacy considerations}
	Recall that according to our privacy definition, private federated QR decomposition returns $\Q^s$ and $\R$ such that $\A^s$ and $\Q^s$ stay private, meaning the raw data does not leave site $s$ and $\Q^s$ can only be computed at $s$. $\R$ is common to all sites. 
	
	\begin{proposition}
	    At the end of federated Gram-Schmidt decomposition, the clients do not have access to more knowledge than their data matrices $\A^s$, the orthonormal partial matrices $\Q^s$, and the global matrices $\R$. 
	\end{proposition}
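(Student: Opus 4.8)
The plan is to trace every message a client receives during \Cref{alg:qr-federated-new} and to show that each one is already a deterministic function of the declared output $(\A^s,\Q^s,\R)$. Since the participants are honest-but-curious, a client can only enrich its view by combining the messages it receives with its own input $\A^s$ and its locally computed values; hence, if every received message lies in the closure of $\{\A^s,\Q^s,\R\}$ under such operations, the proposition follows. The crucial reduction is that, by the security of the additive secret-sharing scheme of \Cref{sec:qr:preliminaries:secure-aggregation}, each aggregation step discloses to the clients nothing beyond the aggregated sum $\sum_s x^s$ of the contributed shares --- never the individual contributions $x^{s'}$ of other sites. It is therefore enough to reason about the sums.

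Concretely, I would first enumerate the quantities broadcast back to the clients over a complete run: (i) the squared norms $n_i=\sum_s \lVert\u_i^s\rVert^2=\lVert\u_i\rVert^2$ for $i\in[d]$ (\ref{alg:gs:n-1-sum} and \ref{alg:gs:aggregate-vnorm}); (ii) the aggregated columns $\r_{\bullet,i-1}=\sum_s\r^s_{\bullet,i-1}$ of $\R$ (\Crefrange{alg:gs:aggregated-start}{alg:gs:aggregated-end}), together with the last column formed at termination; and (iii) the residual coefficients $r_{ij}=\sum_s r_{ij}^s$ for $j<i$ (\ref{alg:gs:agg-res}). Everything else in the algorithm is computed inside the gray regions on site $s$, from $\A^s$, from previously received aggregates, and from the locally derived vectors $\u_i^s$ and $\q_i^s$; in particular $\Q^s=[\q_1^s\cdots\q_d^s]$ is assembled entirely at site $s$. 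One also records the block-row consistency already noted in the text: because the recursion $\u_i^s=\a_i^s-\sum_{j<i}r_{ij}\u_j^s$ uses the same global coefficients $r_{ij}$ at every site, $\u_i^s$ is exactly the row block $t_s$ of the centralized $\u_i$, which justifies $\sum_s\lVert\u_i^s\rVert^2=\lVert\u_i\rVert^2$.

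Next I would show that each of these aggregates is a function of $\R$ alone. The columns in (ii) are by definition the columns of $\R$, so nothing is to be shown. For (i), orthogonality of the $\u_j$ gives $\u_j^\top\a_j=\u_j^\top\bigl(\u_j+\sum_{l<j}r_{jl}\u_l\bigr)=\u_j^\top\u_j=n_j$, hence the diagonal entry satisfies $\R_{j,j}=\q_j^\top\a_j=\u_j^\top\a_j/\sqrt{n_j}=\sqrt{n_j}$, so $n_j=\R_{j,j}^2$. For (iii), $r_{ij}=\u_j^\top\a_i/n_j=\R_{j,i}/\sqrt{n_j}=\R_{j,i}/\R_{j,j}$ whenever $j\le i$, which always holds since $j<i$ in the loop. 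The base case $i=1$ ($n_1=\lVert\a_1\rVert^2=\R_{1,1}^2$) and the terminal computation of $\q_d^s$ and of the final column of $\R$ fit the same pattern. Consequently every message a client ever receives is determined by $\R$; combined with $\A^s$ and the locally assembled $\Q^s$, the client's post-execution knowledge is contained in the closure of $\{\A^s,\Q^s,\R\}$, as claimed. (Note that quantities such as $\sum_{s'\ne s}r_{ij}^{s'}=r_{ij}-r_{ij}^s$, which a curious client can form, are themselves functions of $(\A^s,\Q^s,\R)$ and so are already inside this closure.)

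I expect the main obstacle to be exhaustiveness rather than depth: one must verify that no line of \Cref{alg:qr-federated-new} sends anything to the aggregator other than the three families of sums above, and that a client never receives a non-aggregated quantity originating at another site --- in particular that the last column of $\R$ is formed by aggregating the local inner products $\q_d^{s\top}\a_l^s$. A secondary point to make explicit is that the guarantee is only as strong as the security assumption placed on the SMPC primitive in \Cref{sec:qr:preliminaries:secure-aggregation} (reliable clients, honest-but-curious behaviour, sums revealed but not summands); this enters as a hypothesis rather than something to be proved. The algebraic identities $n_j=\R_{j,j}^2$ and $r_{ij}=\R_{j,i}/\R_{j,j}$ are routine consequences of the Gram--Schmidt relations in \Crefrange{eq:qr-proj}{eq:qr:r} and carry no real difficulty.
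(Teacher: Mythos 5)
Your proposal is correct and follows essentially the same route as the paper's proof: enumerate the aggregates disclosed during \Cref{alg:qr-federated-new} (the norms $n_i$, the columns of $\R$, and the residuals $r_{ij}$) and observe that they carry no information beyond $\R$ itself. You are in fact somewhat more explicit than the paper, which merely asserts that ``the total information available amounts to the information encoded in the $\R$ matrix,'' whereas you prove the reduction via the identities $n_j=\R_{j,j}^2$ and $r_{ij}=\R_{j,i}/\R_{j,j}$, and you argue from the clients' view (matching the statement) rather than the aggregator's view as the paper does.
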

	\begin{proof}
	We consider the case, where we have no knowledge of the type of matrix (for instance, whether it is sparse, or triangular) to be orthonormalized and analyze the knowledge at the aggregator. Let $\A^s=\Q^s\R$. At the end of the algorithm, the following knowledge is available at the aggregator (We only show the global aggregates, assuming that they are aggregated using secure addition):
	
	\begin{itemize}
		\item $[n_1, \cdots ,n_d]$, the norms of $[\u_1, \cdots, \u_d]$
		\item $\R$ the upper triangular matrix 
		\begin{equation}
		\begin{bmatrix}
		\q_1\cdot \a_1 & \q_1\cdot \a_2 & \cdots & \q_1\cdot \a_d \\
		0 & \q_2\cdot \a_2 & \cdots & \q_2\cdot \a_d \\
		\vdots & 0 & \ddots & \vdots \\
		0 &0 & \cdots & \q_d\cdot \a_d \\
		\end{bmatrix}	
		\end{equation}
		\item the upper triangular matrix of residuals
		\begin{equation}
		\begin{bmatrix}
		\u_1\cdot \a_2 & \u_1\cdot \a_3 & \cdots & \u_1\cdot \a_d \\
		0 & \u_2\cdot \a_3 & \cdots & \u_2\cdot \a_d \\
		\vdots & 0 & \ddots & \vdots \\
		0 &0 & \cdots & \u_d\cdot \a_d \\
		\end{bmatrix}	
		\end{equation}
		\item In particular, we do not have access to the matrices $\U^s$, $\Q^s$ or $\A^s$.
	\end{itemize}
	
	Since $\q_i=\frac{\u_i}{n_i}$, the total information available amounts to the information encoded in the $\R$ matrix. We hence have only access to one factor of the decomposition which does not allow us to find a unique solution to $\A=\Q\R$. We specified our privacy goal as keeping the input matrices $\A^s$ and the orthogonal matrices $\Q^s$ private, therefore the presented algorithm is private as per our definition.
\end{proof}
		
It should be noted that $\R$ does disclose information on the data in form of the feature covariance matrix:
	
	\begin{equation}
	\A^\top\A = \R^\top\Q^\top\Q\R = \R^\top\R
	\label{eq:pca-qr}
	\end{equation}

	\section{Further privacy investigations}
\label{sec:qr:pca}
In this section, we apply federated QR factorization as a subroutine in federated PCA to reveal a privacy breach that can occur, if secure aggregation is not used, or if only 2 parties participate in the computation. The original algorithm  uses QR factorization as the aggregation step \cite{Bai2005}. This centralized procedure can be replaced by federated QR orthonormalization, presumably preventing the disclosure of the local summary statistics. The algorithm is mainly of academic interest, because more efficient schemes for PCA are available for star-like architectures. However, we will show that knowledge of the procedure allows an honest-but-curious participant to exactly reconstruct the other participants' input data. Our attack exploits the fact, that the input matrices are upper triangular and that we have full knowledge of the algorithm. 
	
\subsection{Algorithm}
The algorithm \cite{Bai2005} relies on sending a local $\R$ to the aggregator, where a secondary QR decomposition is performed (\Cref{fed-pca:full-r}). We suggest centering the data globally prior to the computation of the matrix (\Cref{fed-pca:center}). This implies subtracting the mean from each column and dividing by the standard deviation to obtain variables with a mean of 0 and a variance of 1. This also avoids having to account for inter site differences in mean later on. The next step is identical to the original: all the $\R$ matrices are computed at the clients (\Cref{fed-pca:compute-R}). The original algorithm recursively merges the $\R$ matrices at a processor to form the updated $\R'$ matrix until only one matrix remains. By computing the QR decomposition of all clients' $\R$ matrices at once using federated Gram-Schmidt decomposition, sending $\R$ can be avoided. The federated QR algorithm returns \R at all the clients, therefore the final SVD can be directly computed at the client. The clients can also compute the partial left eigenvectors as $U^s=A^sV$ (\Cref{fed-pca:Compute-Us}). 

\begin{algorithm}[ht]
		\caption{Federated PCA using QR factorization \cite{Bai2005}}
		\label{fed-pca:full-r}
		\KwInput{Data matrices $\A^s\in\mathbb{R}^{n^s \times m}$, \# eigenvectors $k$.}
		$\A^s \gets \fedNorm()$ \; \label{fed-pca:center}
		\tikzmk{A}
		\For{$s \in [S]$}
		{\tcp{Compute local R at all clients}
			$\Q^s,\R^s \gets \orthonormalize(\A^s)$\; \label{fed-pca:compute-R}}
		
		\tikzmk{B}
		\boxit[royal]{0.94}
		$[\Q^s], \R \gets \fedOrthonormalize([\R^1, \cdots \R^s ])$\; \label{fed-pca:ortho-R}
		
		\tikzmk{A}
		$\U, \Sigma, \V^\top =  \svd(\R)$\; \label{fed-pca:R-SVD}
		$\U^s \gets \A^s\V$\; \label{fed-pca:Compute-Us}
		\tikzmk{B}
		\boxit[royal]{0.94}
		\textbf{Return} $\U^{s,k}, \V^k$
	\end{algorithm}

\subsection{Privacy of federated Gram-Schmidt on upper triangular matrices}
\label{sec:qr:pca:privacy}
In the original algorithm, the communication of \R poses a problem: Let $\A^\top\A$ be the covariance matrix of the data. Using the fact that \Q is an orthonormal matrix, \R can be used to compute the local covariance matrices of the data and hence leaks information (\Cref{eq:pca-qr}). Therefore, this algorithm is no more private than sending the entire set of local eigenvectors to the next party. The advantage of \Cref{fed-pca:full-r} over its previous version is that it allows the computation of the global $\R$ without communicating the local $\R$ in clear text. The same can be achieved by using secure addition of the covariance matrices or computing the global $\R$ based on the data instead of $\R$. Nonetheless, we investigate this algorithm, because with close analysis it reveals a privacy breach if secure aggregation is not used or only two participants join. We show, that in this case the federated QR decomposition of upper triangular matrices is no more private than sending the upper triangular matrices themselves. The reason for this is the fact that the initial vector norm of the QR step is not technically an aggregate. We visualize the aggregation step in \Cref{fed-pca:full-r} in \Cref{eq:problem}, as it is the motivation for our investigation. To avoid ambiguity, we denote the resulting upper triangular matrix $\S$ with elements $s_{i,j}$. For the remainder of this section, we assume that secure aggregation is not used.
	
	\begin{proposition}
		Let $\R^* = \begin{bmatrix} \R^1 & \R^2 & \cdots & \R^s \end{bmatrix}^\top$ be a vertical stack of upper triangular matrices, of which we want to compute the QR decomposition as $\R^* = \Q\S$. Denote $\Q^s = [\u_1^s, \u_2^s, \cdots \u_d^s]$ the block wise orthogonal matrices at sites $s$. It is possible to reconstruct all $\begin{bmatrix} \R^1 & \R^2 & \cdots & \R^s \end{bmatrix}$ as well as all $\begin{bmatrix} \Q^1 & \Q^2 & \cdots & \Q^s \end{bmatrix}$ when applying the federated QR algorithm on $\R^*$, given one knows that $\R^s$ are upper triangular.
	\end{proposition}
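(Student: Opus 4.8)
The plan is to peel off the columns of each local factor $\R^s$ one at a time, using an outer induction on the column index $i\in[d]$ and, inside the $i$-th step, an inner induction on the row index. Under the standing assumption that secure aggregation is switched off, the honest-but-curious aggregator sees in clear text every per-site quantity exchanged during the protocol, in particular the local norms $n_i^s=\u_i^{s\top}\u_i^s$ and the local residual coefficients $r_{ij}^s=\u_j^{s\top}\a_i^s/n_j$ for all sites $s$ and all $j<i$, in addition to the global scalars $n_j$, the coefficients $r_{ij}$, and the matrix $\S$ and the matrix of residuals, which are known to all parties by construction. (In the variant with $S=2$ and secure aggregation on, a participant knows its own contribution to every aggregate and subtracts it off, thereby recovering exactly these per-site quantities for the other party; the argument below then applies verbatim to reconstruct that party's matrix.) Here $\a_i^s$ denotes the $i$-th column of $\R^s$, so that $\a_i=[\a_i^1;\dots;\a_i^S]$ is the $i$-th column of $\R^*$.

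The structural fact that drives the reconstruction is that Gram-Schmidt preserves the block-triangular shape of $\R^*$. Since $\u_j\in\thespan\{\a_1,\dots,\a_j\}$ and, for $l\le j$, the vector $\a_l^s$ is the $l$-th column of the upper triangular matrix $\R^s$ and hence supported on its first $l$ coordinates, the block $\u_j^s$ is supported on its first $j$ coordinates. Writing $(\cdot)_i$ for the $i$-th coordinate within a block, it follows from $\u_i=\a_i-\sum_{j<i}r_{ij}\u_j$ that every subtracted term vanishes in coordinate $i$, so $(\u_i^s)_i=(\a_i^s)_i=r^s_{i,i}$: the diagonal of each $\R^s$ reappears, site by site, as the $i$-th coordinate of $\u_i^s$.

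For the base case $i=1$ we have $\u_1^s=\a_1^s$ with single nonzero entry $r^s_{1,1}$, so $n_1^s=(r^s_{1,1})^2$ determines $|r^s_{1,1}|$; fixing one sign per site (the ordinary sign freedom of QR) yields the first column of every $\R^s$. For the inductive step, assume columns $1,\dots,i-1$ of every local factor have been recovered; then the adversary reconstructs $\u_1,\dots,\u_{i-1}$ via the Gram-Schmidt recursion (using the recovered columns together with the observed coefficients $r_{jl}$), and hence all their blocks $\u_j^s$, which under the assumption that the $\R^s$ have full column rank satisfy $(\u_j^s)_j=r^s_{j,j}\neq0$. Now recover the off-diagonal entries of column $i$ of $\R^s$ by an inner induction on $k=1,\dots,i-1$: the known quantity $\u_k^{s\top}\a_i^s=n_k r_{ik}^s$ equals $\sum_{l=1}^{k}(\u_k^s)_l\,r^s_{l,i}$, in which $r^s_{1,i},\dots,r^s_{k-1,i}$ are known from the earlier inner steps and the coefficient $(\u_k^s)_k=r^s_{k,k}$ of the unknown $r^s_{k,i}$ is nonzero, so $r^s_{k,i}$ is determined. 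Finally, with all off-diagonal entries of $\a_i^s$ known, the adversary computes every coordinate of $\u_i^s=\a_i^s-\sum_{j<i}r_{ij}\u_j^s$ except the $i$-th, which equals $r^s_{i,i}$; since $n_i^s=\u_i^{s\top}\u_i^s$ is known, this yields $(r^s_{i,i})^2$ and hence $|r^s_{i,i}|$, completing column $i$. After $d$ outer steps every $\R^s$ has been reconstructed (up to the per-site, per-column sign convention), and rerunning Gram-Schmidt on the recovered stack $\R^*$ produces every $\u_i$, hence every $\Q^s=[\u_1^s\cdots\u_d^s]$.

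The part that requires the most care is the support/triangularity bookkeeping: one must verify that the Gram-Schmidt updates never fill in coordinates below the block diagonal, since it is exactly this decoupling that exposes $r^s_{i,i}$ through the per-site norm $n_i^s$ --- this is the formal content of the earlier, informal remark that the initial vector norm of the QR step is not truly an aggregate. A secondary issue is the residual sign ambiguity on the diagonal entries $r^s_{i,i}$, which is inherited from the non-uniqueness of QR and is immaterial to the privacy claim: recovering the $\R^s$ up to column signs already leaks essentially all of the local data, and in particular the local covariance matrices $\R^{s\top}\R^s$ exactly. One should also state explicitly the full-column-rank assumption on the $\R^s$, which makes the divisions by $(\u_k^s)_k=r^s_{k,k}$ legitimate and is in any case required for the Gram-Schmidt QR to be well defined.
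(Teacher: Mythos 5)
Your proof is correct and follows essentially the same route as the paper's: an induction over columns that exploits the upper-triangular support of each $\R^s$ so that the per-site norm $n_1^s$ reveals $(r^s_{1,1})^2$, the disclosed per-site residuals yield the off-diagonal entries of each new column by forward substitution, and the per-site norm $n_i^s$ then pins down the diagonal entry. Your write-up is in fact somewhat more careful than the paper's, making explicit the support/block-triangularity bookkeeping, the full-rank assumption behind the divisions, and the residual column-sign ambiguity that the paper leaves implicit.
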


	\begin{equation}
	\label{eq:problem}
	\R^* = 
	\begin{bmatrix} \R^1\\ \R^2\\ \vdots\\ \R^s \end{bmatrix} = 
	\begin{bmatrix}
	\begin{psmallmatrix} 
	r^1_{11} & r^1_{12} & \cdots & r^1_{1d} \\
	0 & r^1_{22} & \cdots & r^1_{2d} \\
	\vdots & 0 & \ddots & \vdots \\
	0 &0 & \cdots & r^1_{dd} \\
	\end{psmallmatrix}\\ 
	
	\begin{psmallmatrix} 
	r^2_{11} & r^2_{12} & \cdots & r^2_{1d} \\
	0 & r^2_{22} & \cdots & r^2_{2d} \\
	\vdots & 0 & \ddots & \vdots \\
	0 &0 & \cdots & r^2_{dd} \\
	\end{psmallmatrix}\\
	\vdots\\
	
	\begin{psmallmatrix} 
	r^s_{11} & r^s_{12} & \cdots & r^s_{1d} \\
	0 & r^s_{22} & \cdots & r^s_{2d} \\
	\vdots & 0 & \ddots & \vdots \\
	0 &0 & \cdots & r^s_{dd} \\
	\end{psmallmatrix}
	\end{bmatrix}=
	\begin{bmatrix} \Q^1\\ \Q^2\\ \vdots\\ \Q^s \end{bmatrix} \S
	\end{equation}

	\begin{proof}
		Let $\R^* = \begin{bmatrix} \R^1 & \R^2 & \cdots & \R^s \end{bmatrix}^\top$ be the matrix to be decomposed into $\Q$ and $\S$. Denote $\R^s$ and $\Q^s$ the partial matrices only available at site $s \in [S]$. Denote $[\u^s_1 \cdots \u^s_d]$ the partial orthogonal vectors at sites $s$. We show by induction on $i$ that  $\R^s$ and $\Q^s$ can be reconstructed at the aggregator based on the intermediate summary statistics exchanged during the execution of \Cref{alg:qr-federated-new}. Let $i=1$. In the first step of the algorithm (\Crefrange{alg:gs:utu-1}{alg:gs:n-1-sum}, \Cref{alg:qr-federated-new}), when computing $n_1 = \sum_{s=1}^S \r_1^{s\top}\r_1^s$ the clients disclose $(\r^s_{1,1})^2$ to the aggregator which can compute
		
		\begin{equation}
		\u_1 = [\sqrt{\r^1_{1,1}},0, \cdots,0, \sqrt{\r^2_{1,1}}, \cdots,0, \sqrt{\r^s_{1,1}}, 0, \cdots, 0 ]^\top. 
		\end{equation}
		
		Let now $i=2$ and $j=1$, the residuals $p_{2,1}^s \gets  \frac{{\u_1^s}^\top\r_2^s}{n_1} $ are computed and aggregated as $p_{2,1} = \sum_{s=1}^Sp_{2,1}^s$ (\Cref{alg:gs:residuals-end} and \Cref{alg:gs:agg-res}). $n_1$ and $\u_1$ are known. We can compute $r^s_{1,1} =\q^{s\top}_1s^s_{1,1}$ because $\q^s_1$ is orthonormal and only contains a single non-zero entry (\Cref{alg:gs:r-local-start}, ($s$ corresponds to $r$ in the algorithm description)). For the same reason, we can also compute $r^s_{1,2} = \frac{p^s_{2,1}\cdot n_1}{q_{1,1}}$ (\Cref{alg:gs:residuals-end}).
		
		Finally, we compute $n_2 \gets \sum_{s=1}^S n_2^s$, with  $n_2^s = {\u_2^s}^\top \u_2^s$ where $\u_{2}^s \gets \r_{2}^s - \sum_{j =1}^{i-1} p_{21} \cdot \u_{1}^s$. This can be simplified to $\u_{2}^s = 
		\begin{psmallmatrix}
		r_{12}^s - p_{21}\cdot  u_{11}^s\\
		r_{22}^s
		\end{psmallmatrix}$, 
		because only $u^s_{1,1}$ is non-zero. $r^s_{1,2}$, $p_{2,1}$ and $u^s_{1,1}$ are known, so $r^s_{2,2} = \sqrt{n^s_2-(r_{1,2}^s - p_{2,1}\cdot u_{1,i}^s)^2}$ can be computed, which in turn means $\u^s_2$ is known completely. At this point, $\u_1$, $\u_2$, $\r^s_1$ and $\r^s_2$ are known to the aggregator. \newline
		
		For the inductive step, we assume to have computed $\Q$ and $[\R^1 \cdots \R^s]^\top$ up to column $i-1$, we can compute column $i$. We set $j=i-1$. 
		
		The residuals $p_{ij}^s \gets {\u_j^s}^\top \r_i^s / n_j $ are computed, and aggregated as $p_{ij} = \sum_{s=1}^Sr_{ij}^s$. $n_j$, $\r_j$ and $\u_j$ are known for $j \in [i-1]$. We can compute $r^s_{ij}$ for $j \in [i-1]$ via successive variable substitution due to the fact that the $\R^s_i$ are upper triangular. 
		\begin{equation}
		\raggedleft{
			\begin{cases}
			r_{1,i} =   \frac{p_{i,1}\cdot n_1}{u_{1,1}}\\
			r_{2,i} = \frac{p_{i,2} \cdot n_2-u_{1,2} \cdot r_{1,i}}{u_{2,2}}\\
			\vdots\\
			r_{j-1,i} = \frac{p_{i,j}\cdot n_j-\sum_{n=0}^{i}u_{n-1} \cdot p_{i, n-1}}{u_{n-1,n-1}}\\
			\end{cases}
		}
		\end{equation}
		
		Finally, we compute $n_i \gets \sum_{s=1}^S n_i^s$, with  $n_i^s = {\u_i^s}^\top \u_i^s$ where $\u_{i}^s \gets \r_{i}^s - \sum_{j =1}^{i-1} p_{ij} \cdot \u_{j}^s$ which  can be rewritten as 
		\begin{equation}
		\u_{i}^s = 
		\begin{pmatrix}
		r_{1,i}^s - \sum_{j=1}^{i-1} p_{i,j} \cdot  u_{1,j}^s\\
		r_{2,i}^s - \sum_{j=1}^{i-1} p_{i,j} \cdot  u_{2,j}^s\\
		\vdots\\
		r_{j,i}^s - \sum_{j=1}^{i-1} p_{i,j} \cdot  u_{j,j}^s\\
		r_{i,i}^s
		\end{pmatrix}
		\end{equation}
		where $\r_{i,i}$ is the only unknown.
		$\r_{ji} = \sqrt{n^s_i- \sum_{j=1}^{i-1}(\r_{j,i}^s -  p_{i,j}\cdot  \u_{j,i}^s)^2}$, which in turn means $\u^s_j$ is complete, because $n_i$, $p_{i,j}$ and $\u_j^s$ as well as $\r_{j,i}$ are known. 
	\end{proof}
	
	When using general matrices, even with only two participants, and if SMPC is used, this attack is not possible, because the first vector norm summarizes more than one element. However, the previous application highlights that tracking the parameters during federated iterations could reveal more information on the input data than the participants intend, especially when the methods are fully traceable and do not involve randomized steps. In the case of sparse matrices, a partial column $\a^s_i$ which contains no entries, can be detected at the aggregator as the inner product in $\R$ would be \num{0}. The problem, with the methods presented here, is that the knowledge of algorithmic procedure and the absence of random elements in the algorithm allow us to backtrack more information than intended, given an 'attack angle'.

\section{Systems of linear equations}
\label{sec:fed-reg}
To showcase a more realistic use of our algorithm, we consider the application of federated orthonormalization for the solution of systems of linear equations. A popular use of QR decomposition is linear regression. For example, \texttt{R}'s \texttt{lm()} function uses the QR algorithm by default \cite{Rlang}. Here, we demonstrate how it is possible to solve a system of linear equations of the form $\A\x = \b$ with only one further round of communication, based on the QR decomposition. This technique can be used to replace the solver implemented for instance in \cite{Nasirigerdeh2022}. It does not require matrix inversion and is therefore more suitable for large scale matrices. Let $\A$ and $\b$ be partitioned into $\A^s$ and $\b^s$ respectively, and $\x$ the solution common to all sites. After the QR decomposition of the matrix $\A$, $\Q^s$ is known at the sites, and $\R$ is known at all sites and the aggregator. In order to compute $\x$, the clients have to send their vector inner product of $\y^s = \Q^\top\b^s$ to the aggregator which securely computes the global vector $\y = \sum_{s}^{S} \y^s$. The aggregator can directly compute $\x$ by successive variable substitution and share the result with the clients (see \Cref{sec:qr:linalg-cent}). With one additional step, one can also compute p-values and $r^2$ statistics. The clients compute the sum of the squared residuals as $rss_s = \sum (\A^s\x - \b)^2$ and the sum of the squared fitted values $mss_s = \sum \A^s\b$ and send them to the aggregator, which computes the global sums $rss=\sum_s^S rss_s$, and $mss=\sum_s^S mss_s$.
For the p-value, the variance is computes as $\sigma = \frac{rss}{n-m-1}(\R^\top\R)^{-1}$, and standard error as $SE= \sqrt{\sigma}$. Here, we exploit the fact that the covariance matrix can be expressed using $\R$ (\Cref{eq:pca-qr}). The T-statistic used to determine the p-value can be computed as $T=\frac{\x}{SE}$. For more details see \cite{Nasirigerdeh2022} where a more detailed description of the p-value calculation is provided. $r^2$ can be computed as follows: $r^2 =  \frac{mss}{mss+rss}$.
	
	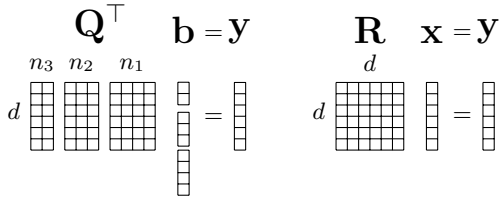
\begin{figure}[h]
		\centering
		\begin{tikzpicture}[node distance=0.35cm]

\node (Q-vert-3) {\tikz\draw[step=0.15] (0,0) grid (0.3,-0.9);};
\node[right = of Q-vert-3, anchor=west, xshift=-0.45cm] (Q-vert-2) {\tikz\draw[step=0.15] (0,0) grid(0.45, -0.9);};
\node [right = of Q-vert-2,xshift=-0.45cm] (Q-vert) {\tikz\draw[step=0.15] (0,0) grid (0.6, -0.9);};

\node[above = of Q-vert-2, fill=white,font=\Large,inner sep=1pt, xshift=0.25cm] (Q-classic-label) {$\Q^\top$};

\node[anchor=south,font=\small,yshift=-.15cm, xshift=-0.1cm] at (Q-vert-3.west) {$d$};
\node[anchor=south,font=\small,yshift=-.1cm] at (Q-vert.north) {$n_1$};
\node[anchor=south,font=\small,yshift=-.1cm] at (Q-vert-2.north) {$n_2$};
\node[anchor=south,font=\small,yshift=-.1cm] at (Q-vert-3.north) {$n_3$};

\node[right = of Q-vert, xshift=-0.3cm, yshift=-0.75cm] (y-vert) {\tikz\draw[step=0.15] (0,0) grid (0.15,-0.6);};

\node[above = of y-vert, anchor=north, yshift=0.15cm] (y-vert-2) {\tikz\draw[step=0.15] (0,0) grid (0.15,0.45);};
\node[above = of y-vert-2, anchor=north, yshift=0.05cm] (y-vert-3) {\tikz\draw[step=0.15] (0,0) grid (0.15,0.3);};
\node[above =of y-vert-3, fill=white,font=\Large,inner sep=1pt] (y-label) {\b};
\node[right =of y-vert-2, anchor=east,font=\small,xshift=0.5cm, yshift=0.15cm] (eq1) at (y-vert-2.west) {$=$};
\node[above =of eq1, anchor=south,font=\small, yshift=0.4cm] at (eq1.north) {$=$};
\node[right = of y-vert-3, yshift=-0.3cm] (x-vert) {\tikz\draw[step=0.15] (0,0) grid (0.15,-0.9);};
\node[above =of x-vert, fill=white,font=\Large,inner sep=1pt] (y-label) {\y};

\node[right = of x-vert, yshift=-0.3cm, xshift=0.6cm, yshift=0.3cm] (R-vert) {\tikz\draw[step=0.15] (0,0) grid (0.9,-0.9);};
\node[above = of R-vert, fill=white,font=\Large,inner sep=1pt] (R-label) {\R};
\node[anchor=south,font=\small,yshift=-.1cm] at (R-vert.north) {$d$};
\node[anchor=south,font=\small,yshift=-.15cm, xshift=-0.1cm] at (R-vert.west) {$d$};

\node[right = of R-vert, xshift=-0.3cm] (x-vert-2) {\tikz\draw[step=0.15] (0,0) grid (0.15,-0.9);};
\node[right =of x-vert-2, anchor=east,font=\small,xshift=0.5cm] (eq2) at (x-vert-2.west) {$=$};	
\node[above =of eq2, anchor=south,font=\small, yshift=0.4cm] at (eq2.north) {$=$};

\node[above =of x-vert-2, fill=white,font=\Large,inner sep=1pt] (y-label-2) {\x};

	\node[right = of x-vert-2] (x-vert-3) {\tikz\draw[step=0.15] (0,0) grid (0.15,-0.9);};

\node[above =of x-vert-3, fill=white,font=\Large,inner sep=1pt] (y-label-3) {\y};

\end{tikzpicture}
		\caption{Schematic solution of a system of linear equations based on federated QR factorization of matrix $\A$.}
		\label{fig:fed-lin-sys}
	\end{figure}
	
	\section{Experiments}
\label{sec:experiments}
\subsection{Data reconstruction}
In order to show that Householder reflection and Givens rotation indeed lead to confidentiality losses when the intermediate parameters become known, we implemented the federated prototypes and logged the parameters that are known to all participants. We generate random Gaussian matrices of dimension $5000\times 10$ and execute the federated prototypes. We then reconstruct the input data based on these aggregate statistics.

For the reconstruction of the input matrix based on the parameters disclosed during a Householder transformation, we log the reflection matrix $\u\u^\top$, $\beta$, $sgn(\A_{1,1})$, $m$, and $n$ and the first element of $\u$. (We assume without loss of generality that the first client is the attacker). We then apply the reconstruction described in \Cref{prop:householder}. We reconstruct the data with an error of $2.21^{-15}$ averaged over $10$ iterations. The error is calculated element-wise difference of the input and reconstructed matrices. In order to reconstruct the data based on the Given's parameters, we apply the procedure described in \Cref{prop:givens}. The average reconstruction error over 10 repeated experiments is $1.3^{-14}$. Therefore, we conclude that our theoretical attacks are indeed possible in realistic implementation.

\subsection{Linear regression}
\label{sec:experiments:linreg}
We implement the QR decomposition scheme and a prototype for linear regression in python to show that they provide accurate results in practice. In this experimental study we use three example data sets from \texttt{sklearn} and Kaggle: the Pima Indians diabetes \cite{diabetes}, WHO life expectancy \cite{who} and fish market \cite{fishmarket} data sets. We split the data sets horizontally in \num{5} chunks. We compute the baseline reduced QR decomposition using \texttt{scipy.linalg.qr}. As an error measure, we use the Frobenius norm between the centralized and federated $\Q$ and $\R$ matrices ($||\Q_c-\Q_f||_F$,$||\R_c-\R_f||_F$). For the linear regression, we use the \texttt{lm} function in \texttt{R} as a reference, as it uses QR decomposition as its standard solver. As additional error measures, we compute the sum of the absolute differences between the coefficients ($\sum_{s \in [S]}\x_c-\x_f$), $r^2$-values ($r^2_c-r^2_f$), and $p$-values ($\sum_{s \in [S]} (p_c-p_f)$). The results of these experiments are summarized in \Cref{tab:experiments}. The matrices, coefficients and $r^2$ values are identical, and there only minor variations in the $p-value$. 
	
	\begin{table}[htb]
	\centering
		\caption{Results of the experiments}
		\label{tab:experiments}
		\begin{tabular}{llll}
			\toprule
			Dataset & Diabetes& WHO & Fish market\\
			\midrule
			$||\Q_c-\Q_f||_F$ 		& $3.0e^{-14}$	& 	$1.1e^{-14}$	&$3.8e^{-13}$\\
			$||\R_c-\R_f||_F$ 		&	$1.7e^{-14}$&	$7.5e^{-7}$	&$9.3e^{-12}$\\
			$\x_c-\x_f$				& $2.23e^{-11}$	& $4.5e^{-12}$	&	$3.04e^{-11}$	\\
			$\sum_{d} (p_c-p_f)$ 	&	$0.003$		& $0.019$		& 	$0.029$	\\
			$r^2_c-r^2_f$			& $1.4e^{-17}$	& $0 $			&	$2.5e^{-15}$\\
			\bottomrule		
		\end{tabular}
	\end{table}
	
\subsection{Implementation \& Hardware}
The experiments were run on a standard laptop with $8$ CPUs and $16$ GB RAM. The algorithms were implemented in Python using \texttt{numpy} and \texttt{scipy}. The code can be found in the corresponding Gitlab repository at \url{https://github.com/AnneHartebrodt/federated-qr}. 

		\section{Discussion and future directions}
	\label{sec:qr:discussion}
	In this manuscript, we evaluate the most popular algorithms for QR decomposition with respect to their confidentiality in a federated context. As explained in \Cref{sec:qr:fed-householder} and \Cref{sec:qr:fed-givens}, Householder reflection and Givens rotation have immediate drawbacks that make them unsuitable to hybrid federated learning where the parameters are securely aggregated, because it is possible to extract the original data from the parameters. This makes the presented federated Gram-Schmidt QR algorithm the only algorithm which does not trivially expose the original data under the assumed federated setting. We argued that the parameters revealed during the federated Gram-Schmidt orthonormalization procedure contain no more information than the upper triangular matrix $\R$, and therefore fulfill our privacy specification of federated QR decomposition.
	
	In this article, we assume a hybrid federated learning setup, where the global parameters become known in clear text. This means, the results may only partially translate to systems which rely on encrypting the entire learning process under homomorphic encryption or computing the whole algorithm using more advanced secure multiparty computation schemes. These techniques are still expensive in practice \cite{Cho2018, Al-Rubaie2017}, but might be required to provide secure algorithms for Householder factorization and Givens rotation. If privacy is not a concern, detailed investigations of potential gains in transmission rounds would be required to find the most efficient QR scheme, most likely Givens algorithm according to our preliminary analysis.
	
	The investigation of information leakage associated with the parameters exchanged during the federated QR orthonormalization spins a cautionary tale. We showed that it is possible to reconstruct the input matrices, if they are upper triangular, solely from the exchanged parameters, because the first aggregate is technically not an aggregate and triggers a revealing cascade. This means that with fewer than three parties, even the clients could reconstruct the other participants' matrices. We showed that for upper triangular matrices privacy breaches are possible. Therefore, further investigations on other special types of matrices will be required.

	\section{Conclusion}
	\label{sec:qr:conclusion}
	This work presented federated implementations of three popular QR algorithms and investigated them with respect to their privacy, if they are deployed in a hybrid federated system with secure parameter aggregation. Popular applications of QR factorization, such as linear regression, can therefore be translated to the federated domain. The privacy must be considered carefully: The use of the outer vector product in Householder factorization, introduces a trivial confidentiality breach. Likewise, a trivial privacy leak in Givens rotation makes this algorithm unsuitable for the chosen federated learning paradigm. We come to the conclusion, that only Gram-Schmidt QR decomposition is suitable, due to it's reliance on inner vector products. Special matrices, such as upper triangular matrices, may still be more vulnerable to confidentiality breaches.

	\section*{Acknowledgements}
		The FeatureCloud project has received funding from the European Union’s Horizon 2020 research and innovation programme under grant agreement No 826078. This publication reflects only the authors’ view and the European Commission is not responsible for any use that may be made of the information it contains.
	\bibliographystyle{IEEEtran}
	\bibliography{references}
	

	\balance

\end{document}